\newcommand{\opt}{\textsc{Opt}\xspace}
\newcommand{\Alg}{\textsc{\mbox{Harmonic-Stretch}}\xspace}
\newcommand{\HS}{\textsc{Harmonic-Shifting}\xspace}
\newcommand{\FF}{\textsc{First-Fit}\xspace}
\newcommand{\CR}{\textsc{$1.75$}\xspace}
\begin{document}
\input{figs}
\title{On the Fault-Tolerant \\ Online Bin Packing Problem}
%
%
\author{Shahin Kamali
\and Pooya Nikbakht
}
%
\institute{
\ \\ Department of Computer Science, University of Manitoba, Winnipeg, Canada\\
\email{shahin.kamali@umanitoba.ca},
\email{nikbakhp@myumanitoba.ca} \ \\ \
}

\maketitle              
\begin{abstract}
	We study the fault-tolerant variant of the online bin packing problem. Similar to the classic bin packing problem, an online sequence of items of various sizes should be packed into a minimum number of bins of uniform capacity. For applications such as server consolidation, where bins represent servers and items represent jobs of various loads, it is necessary to maintain fault-tolerant solutions. In a fault-tolerant packing, any job is replicated into $f+1$ servers, for some integer $f>1$, so that the failure of up to $f$ servers does not interrupt service.
    We build over a practical model introduced by Li and Tang [SPAA 2017] in which each job of load $x$ has a primary replica of load $x$ and $f$ standby replicas, each of load $x/\eta$, where $\eta >1$ is a parameter of the problem. Upon failure of up to $f$ servers, any primary replica in a failed bin should be replaced by one of its standby replicas so that the extra load of the new primary replica does not cause an overflow in its bin. \\
    We study a general setting in which bins might fail while the input is still being revealed.
    Our main contribution is an algorithm, named \Alg, which maintains fault-tolerant packings under this general setting.
    We prove that \Alg has an asymptotic competitive ratio of at most 1.75. This is an improvement over the best existing asymptotic competitive ratio 2 of an algorithm by Li and Tang [TPDS 2020], which works under a model that assumes bins fail only after all items are packed.
	 	
\end{abstract}

\vspace*{.3mm}
\section{Introduction}
In the classic online bin packing problem, a sequence of items of various sizes in the range $(0,1]$ is revealed in an online manner. The goal is to pack these items into a minimum number of bins of uniform capacity 1. An online algorithm places each item into a bin without any information about the forthcoming items.
For example, the \FF algorithm places each item into the first bin that has enough space (and opens a new bin if no bin has enough space). The Harmonic family of algorithms place items of similar sizes together in the same bins (see \cite{LeeLee85,RamBrowLeeLee89,Seid02} for some variants of the Harmonic algorithm). 

Server consolidation is an application of the online bin packing problem in the cloud. Service providers such as Amazon EC2~\cite{EC2} or Microsoft Azure~\cite{Azure} host client applications, called \emph{tenants}, on their servers. 
Upon the arrival of a tenant, an online algorithm assigns it to a server that has enough resources available for the tenant. 
For each tenant, the service provider commits to a Service Level Agreement (SLA) that specifies the minimum performance requirement for the tenant~\cite{articleSLA}. In particular, different tenants have various loads that are indicated in their SLAs. 
The goal of the service provider is to satisfy the SLA requirements while minimizing operational costs. To achieve this goal, cloud providers often \emph{consolidate} tenants on shared computing machines to improve utilization~\cite{AjiroT07}. 
Such consolidation can be modeled via online bin packing, where each item represents a tenant and each bin represents a computing machine (a server). 

In cloud systems, client applications are often replicated in multiple servers. Such replication is necessary to avoid service interruptions in the case of server failures. In order to keep services uninterrupted against the failure of up to $f$ services, it is necessary to replicate each tenant in at least $f+1$ different servers. In many practical scenarios, each tenant has a \emph{primary} replica, which handles read/write queries, and multiple \emph{standby} replicas, which act as backup replicas in anticipation of server failures. Naturally, the computational resources required for hosting primary replicas are more than that of standby replicas \cite{yanagisawa2013}. 

In this paper, we continue a line of research on the primary-standby scheme for the fault-tolerant bin packing problem~\cite{liTang17,liTang20}. The goal is to pack an online sequence of items (tenants) into a minimum number of bins (servers) such that each item of size $x$ has a primary replica of size $x$ and $f$ standby replicas, each of size $x/\eta$, for some parameter $\eta >1$. Over time, some servers might fail, and some of the previously failed servers might recover. An algorithm has no knowledge about how servers fail or recover, but it is guaranteed that the number of failed servers at each given time is at most $f$. 
To ensure the service is fault-tolerant, the primary replica of each job should be available at any given time. Therefore, when a server that hosts the primary replica of an item $x$ fails, a standby replica of $x$ should be selected to become its new primary replica. The subsequent increase in the load of such replica {(from $x/\eta$ to $x$)} should not cause an overflow in the bin (see Section~\ref{sect:model} for a formal definition).

The \emph{asymptotic competitive ratio} is the standard measure for comparing online bin packing algorithms.
An online algorithm A is said to have an asymptotic competitive ratio $r$ iff for any sequence $\sigma$ we have $A(\sigma) \leq  r \ \opt(\sigma) + c$, where $A(\sigma)$ is the number of bins in the packing of A for $\sigma$, $\opt(\sigma)$ is the number of bins in an optimal packing of $\sigma$, and $c$ is a constant independent of the length of $\sigma$. Throughout the paper, we use the term competitive ratio to refer to the asymptotic competitive ratio.

\subsection{Previous work \& contribution}\label{sect:prev}

Schaffner et al.~\cite{SchaJan13} proposed the first model for the fault-tolerant bin packing problem and studied the competitiveness of a few basic algorithms. Subsequent work on this model resulted in improved algorithms~\cite{DaudjeeKL14,MateDK17,Exper2021}. These initial results studied a model in which the load of an item is evenly distributed between all its replicas. Li and Tang~\cite{liTang17} introduced an alternative model that distinguishes between primary and standby replicas. 
This model was further studied in~\cite{liTang20}, where the \HS algorithm was presented and proved to have a competitive ratio of at most 2. 
All previous algorithms, in particular the \HS algorithm, assume that an online sequence is first packed and \emph{then} a set of up to $f$ bins might fail. In practice, however, the packing is an ongoing process, and the servers might fail while the input is still being revealed. 

In this paper, we study the fault-tolerant bin packing problem with the primary-standby scheme, as introduced in~\cite{liTang17}. We assume that bins can fail and recover in an online manner so that at most $f$ bins are failed at the same time. As such, an algorithm in this model requires a \emph{packing strategy}, which allocates items to bins, and an \emph{adjustment strategy}, which makes necessary adjustments (i.e., promoting a standby replica to a primary replica and vice versa) when bins fail or recover. 
The packing and adjustment strategies should coordinate to maintain ``valid" solutions that are tolerant against bin failures, that is, primary replicas are available for all items, and no bin is overloaded throughout the packing and adjustment processes.

We introduce an algorithm named \Alg that maintains fault-tolerant packings for an online sequence of items. As the prefix ``Harmonic" suggests, \Alg classifies items by their sizes. The classification and treatment of items in each class is, however, different from the existing Harmonic-based algorithms. In particular, unlike the algorithms in~\cite{liTang17,liTang20}, which classify {items} based on the size of their standby replicas, \Alg classifies items based on the size of their both primary and standby {replicas}. The placement and adjustment strategies in \Alg are designed in a flexible way that allows maintaining valid packings even if bins fail before the packing completes. We prove that \Alg has a competitive ratio of at most \CR, which is an improvement over the competitive ratio 2 of \HS of~\cite{liTang20}. In summary, \Alg is designed to work in a more general setting, and yet achieves a better competitive ratio when compared to the previous algorithms.


\section{Primary-standby model for fault-tolerant bin-packing}\label{sect:model} 

The primary-standby scheme for the fault-tolerant bin packing problem is defined as follows:
\vspace*{1mm} 
\begin{definition}\label{def:problem}
In the \emph{$(f,\eta)$-fault-tolerant bin packing problem}, a sequence of $n$ items, each having a size in the range $(0,1]$, is revealed in an online manner. When an item of size $x$ arrives, an algorithm places a \emph{primary replica} of size $x$ and $f$ \emph{standby replicas}, each of size $x/\eta$, into bins of unit capacity, without any prior information about the forthcoming items. 
Throughout the packing process, some bins might fail and some of the previously failed bins might recover, in a way that the number of failed bins stays at most $f$ at any time. 
In a valid packing, a primary replica of each item should be always available. Therefore, upon failure of a bin with a primary replica of an item $x$, a standby replica of $x$ (in a non-failed bin) needs to be selected and promoted to become the new primary replica of $x$. The subsequent increase in the size of the {promoted standby} replica {(from $x/\eta$ to $x$)} should not cause an overload in any bin. The objective is to maintain a valid packing with a minimum number of bins.
\end{definition}

We assume that the packing of the $f+1$ replicas of any item takes place concurrently, that is, no bin fails when a group of $f+1$ replicas is being packed. 
Note that an algorithm can change the status of a replica from primary to standby and vice versa, but it cannot move replicas from one bin to another. 
In order to achieve a valid packing, the $f+1$ replicas of each item need to be packed in $f+1$ different bins; otherwise, failure of up to $f$ bins that contain all replicas of an item makes that item {inaccessible}. 

\begin{example}
Figure~\ref{fig:validpacking} illustrates Definition~\ref{def:problem}. 
Each item of size $x$ has a primary replica of size $x$ and $f=2$ standby replicas of size $x/\eta$, where $\eta = 2.0$. 
The packing~(a) is a valid packing. The failure of any single bin or a pair of bins {at any time} can be addressed by promoting a standby replica into a primary replica {without overloading any bin}. {For example}, the arrows in the figure point to the standby replicas that are selected to become primary replicas after the simultaneous failure of bins $B_1$ and $B_2$ at time $t_1$. The packing~(b), on the other hand, is not a valid packing of $\sigma$: if $B_1$ and $B_2$ fail, it is not possible to select standby replicas to replace the failed primary replicas without overloading a bin.  
\end{example}

\begin{figure}[!h]
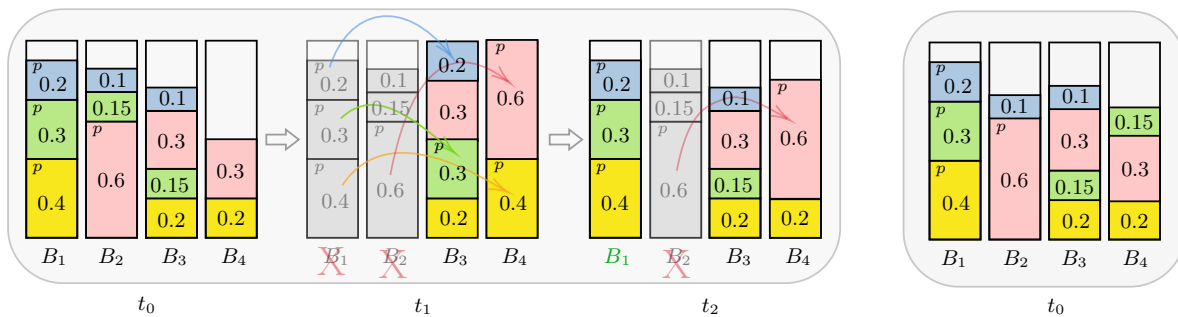

	\centering
	\begin{subfigure}[b]{0.73\textwidth}
		\centering
		\scalebox{.9}{\defPartA}
		\caption{A valid solution: four items are packed at time $t_0$. Bins $B_1$ and $B_2$ fail at time $t_1$, and standby replicas in other bins replace primary replicas in the failed bins. $B_1$ recovers at $t_2$, and the algorithm gives back the primary status to replicas in $B_1$.}
		\label{fig:validSubFig}
	\end{subfigure}
\hspace{0.01\textwidth}
	\begin{subfigure}[b]{0.24\textwidth}
		\centering
		\scalebox{.9}{\defPartB}
		\caption{An invalid solution at time $t_0$.\ \\ \ \\ \  }
		\label{fig:InvalidSubFig}
	\end{subfigure}
	\caption{Packing sequence $\sigma =  (0.4,0.6,0.3,0.2)$, where $f=2$ and $\eta=2.0$. Replicas of the same items have the same color; primary replicas have the letter $p$.}
	\label{fig:validpacking}
\end{figure}

\section{\Alg algorithm}

In this section, we present the \Alg algorithm. First, we provide an overview of the main components of the algorithm.
	\paragraph{Item classification:} Items are partitioned into \emph{classes}, based on the size of their primary and standby replicas. There are 7 possible classes for primary replicas and $\lfloor 6\eta \rfloor+1$ classes for standby replicas. An item that has a primary replica of class $i$ ($1\leq i \leq 7) $ and a standby replica of class $j$ ($1\leq j \leq \lfloor 6\eta \rfloor+1$) is called an $(i,j)$-items. When $i=7$ and $j=\lfloor 6\eta \rfloor+1$, the $(i,j)$-items are called \emph{small items}, while other items are called \emph{regular items} (see Section~\ref{sect:class} for details).\\	
	Items that have the same primary and standby classes are packed separately from other items, that is, a replica of an $(i,j)$-item is never placed with a replica of an $(i',j')$-item together in the same bin if $i\neq i'$ or $j\neq j'$.
	\paragraph{Maintaining bin groups:} To place $(i,j)$-items, the algorithm maintains \emph{groups} of bins. Each group is formed by a constant number of (initially empty) bins. 
	At any given time, there is one ``active" group for $(i,j)$-items, where the incoming $(i,j)$-items are packed into.
	When no more items can fit {into the active group}, the group becomes ``complete", and another group becomes the active group. 
	If a bin of the active group fails, the algorithm {declares that group as ``unavailable",} leaves {that group} ``incomplete" and selects a new active group. When all failed bins of an incomplete group recover, that group becomes ``available" again. When a new active group is required (i.e., when the currently active group becomes complete or one of its bins fails), an {(incomplete)} available group is selected as the new active group. If no available group exists, the algorithm opens a fresh group with empty bins as the active group. 
	Given that at most $f$ bins can fail at the same time, the algorithm maintains up to $f+1$ incomplete groups for $(i,j)$-items, out of which one group is the active group, and $f$ groups are either unavailable or 
	include bins that have recovered from a failure (see Section~\ref{sect:groups}). 
		
	\paragraph{Packing strategy:} When placing $(i,j)$-items inside their active group, primary and standby replicas are packed in separate bins, which are respectively called {\emph{primary} and \emph{standby bins}}. Items in primary bins are packed as tightly as possible, while items in the standby bins are packed so that there is enough space for the promotion of exactly one replica. The packing strategy ensures that a primary bin shares replicas of at most one item with any standby bin. For small items{, a consecutive number} of them are merged to form ``super-replicas"; each super-replica is treated in the same way that regular items are packed (see Section~\ref{sect:packstg}).

	\paragraph{Adjustment strategy:} When a bin $B$ fails, for each primary replica $x$ in $B$, a standby replica of $x$ in a non-failed bin should be promoted to become the new primary replica. 
	This is done through maintaining an injective mapping $h$. The domain of $h$ is the set of primary replicas like $x$ {residing in the} primary bins that are failed, and the range of $h$ is a set of non-failed standby bins in the same group such that $h(x)$ contains a standby replica of $x$.   
	 The injective nature of the mapping, and the fact that there is enough space for expansion of one standby replica in $h(x)$, implies that one can promote {the standby} replica in $h(x)$ to replace $x$ as the primary replica, without causing an overflow in $h(x)$.
	The assignment of standby replicas as primary replicas is temporary, that is, upon the recovery of $B$, its primary replicas like $x$ will retain their primary status and are removed from the domain of $h$, while the promoted replica in $h(x)$ is demoted to become a standby replicas again (see Section~\ref{sect:adj}). 
	
In what follows, we explain the above components in more detail.

\subsection{Item classification}\label{sect:class}
\noindent There are seven classes for primary replicas and $\lfloor 6\eta \rfloor+1$ classes for standby replicas  (see Table~\ref{tbl:weighting} for details). An item has \emph{primary class} $i \in \{1,2,3,4,5\}$ if its primary replica is {of size} in the range $(\frac{1}{i+1}, \frac{1}{i}]$, primary class 6 if its primary replica is {of size} in the range $(\frac{1}{7-1/\eta},\frac{1}{6}]$, and primary class $7$ if its primary replica is of size at most $\frac{1}{7-1/\eta}$. We refer to items of primary class $i \leq 6$ as \emph{regular} items, and items of primary class 7 as \emph{small} items.
An item has \emph{standby class} $j \in \{ 1,2, \ldots, \lfloor 6\eta \rfloor -1 \}$ if its standby replica has size in the range $(\frac{1}{j+\eta}, \frac{1}{j+\eta-1}]$, standby class $j= \lfloor 6\eta \rfloor$ if its standby class has size in the range $(\frac{1}{7\eta-1}, \frac{1}{\lfloor 6\eta \rfloor +\eta -1}]$, and standby class $\lfloor 6\eta \rfloor+1$ if its standby replica is of size $\frac{1}{7\eta-1}$.

In what follows, we refer to an item of primary class $i$ and standby class $j$ as an $(i,j)$-items.
Primary replicas of small items (with $i=7$) are in the range $(0,\frac{1}{7-1/\eta}]$, and their standby replicas are in the range $(0,\frac{1}{7\eta-1}]$. 
Therefore, an $(i,j)$-item is a small item if $i=7$ and $j=\lfloor 6\eta \rfloor+1$, and a regular item otherwise. 

\begin{table*}[!t]
	\begin{center}
		
		\scalebox{.9}{
			\def\arraystretch{1.7}
			\begin{tabular}{|c|c|c|c|}
				\hline
				\multicolumn{4}{|c|}{\textbf{primary replicas}} \\
				\hline
				\textbf{class} & \textbf{size} & \textbf{weight} & \textbf{density} \\ \hline \hline
				
				$i = 1$ & $(\frac{1}{2}, 1]$ & $1$ & $<2$ \\ \hline
				
				$i = 2$ & $(\frac{1}{3}, \frac{1}{2}]$ & $\frac{1}{2}$ & $<\frac{3}{2}$\\ \hline
				
				$\vdots$ & $\vdots$ & $\vdots$ & $\vdots$ \\ \hline
				
				\ $i \in [1, 5]\ $ & $(\frac{1}{i+1}, \frac{1}{i}]$ & $\frac{1}{i}$ & $<\frac{i+1}{i}$\\ \hline
				
				$i = 6$ & \ $(\frac{1}{7-1/\eta}, \frac{1}{6}]$ \ & $\frac{1}{6}$ & $<\frac{7}{6}$\\ \hline
				
				$i=7$ & $p \in (0, \frac{1}{7-1/\eta}]$ & $\frac{3}{2}p$ & $\frac{3}{2}$\\ \hline

			\end{tabular}
			\ \ \ \ \ \ 
			\begin{tabular}{|c|c|c|c|}
				\hline
				\multicolumn{4}{|c|}{\textbf{standby replicas}} \\
				\hline
				\textbf{class} & \textbf{size} & \textbf{weight} & \textbf{density} \\ \hline \hline
				
				$j=1$ &  $(\frac{1}{\eta+1}, \frac{1}{\eta}]$ & 1 & $<\eta+1$\\ \hline
				
				$j=2$ &  $(\frac{1}{\eta+2}, \frac{1}{\eta+1}]$ & $\frac{1}{2}$ & $<\frac{\eta+2}{2}$\\ \hline
				
				$\vdots$ &  $\vdots$ & $\vdots$ & $\vdots$\\ \hline
				
				\ $j \in [1, \lfloor 6\eta \rfloor - 1]\ $ &  $(\frac{1}{\eta+j}, \frac{1}{\eta+j-1}]$ & $\frac{1}{j}$ & $<\frac{\eta+j}{j}$\\ \hline
				
				$j = \lfloor 6\eta \rfloor$ &  \ $(\frac{1}{7\eta-1}, \frac{1}{\eta+\lfloor 6\eta \rfloor-1}]$ \ & $\frac{1}{\lfloor 6\eta \rfloor}$ & $<\frac{\eta+\lfloor 6\eta \rfloor}{\lfloor 6\eta \rfloor}$\\ \hline
				
				$j=\lfloor 6\eta \rfloor + 1$ &  $s\in (0, \frac{1}{7\eta-1}]$ & $\frac{3}{2}s$ & $\frac{3}{2}$\\ \hline
			\end{tabular}
		}
	\end{center}
	\caption{A summary of the {replica classes} used in the definition and analysis of the \Alg algorithm. The weight and density of classes is used in the analysis of the algorithm.}
	\label{tbl:weighting}
\end{table*}

\subsection{Maintaining bin groups}\label{sect:groups} For each pair of $i,j$, the algorithm maintains one group of bins, which are all non-failed, {as the \emph{active} group}. As $(i,j)$-items are revealed, they are packed into bins of the active group, as will be explained in Section~\ref{sect:packstg}. 
In the beginning, a group of all-empty bins is opened and declared as the active group for $(i,j)$-items. 
A new active group is needed when either i) one of the bins in the active group fails or ii) enough items are placed inside the active group, and the group becomes \emph{complete};  before that, the group is \emph{incomplete}.  
A group is said to be \emph{available} if none of its bins are failed. An active group is always available. 
When a new active group is required, the algorithm checks whether an incomplete and available group exists. Such a group, if it exists, is a former active group that, at some point lost its active status due to a bin failure. Since the group is now available, its failed bins are now recovered. {If such a group exists, it is selected as the new active group (if multiple such groups exist, one is chosen arbitrarily).} On the other hand, if no incomplete, available group exists, the algorithm opens a fresh group of all-empty bins and declares it as the active group.
 
\begin{lemma}
There are at most $f+1$ incomplete groups at each given time during the execution of the algorithm.\label{lem:new}
\end{lemma}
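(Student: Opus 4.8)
The plan is to track the set of incomplete groups maintained by the algorithm for a fixed pair $(i,j)$ and argue that this set can never exceed $f+1$ in size. The key observation is a clean \emph{invariant}: at any moment, among all incomplete groups, exactly one is the active group and every other incomplete group must contain at least one currently failed bin. If I can establish this invariant, the bound follows immediately, since the number of failed bins is at most $f$ at any time, so at most $f$ incomplete groups can be non-active, giving at most $f+1$ incomplete groups in total.

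First I would set up the bookkeeping. A group starts its life when it is opened (or re-selected) as the active group, and a group ceases to be incomplete precisely when it becomes complete. The only events that change the status of groups are: (i) an item arrives and either fits in the active group or triggers the need for a new active group because the active group became complete; (ii) a bin in the active group fails, triggering the need for a new active group; and (iii) a failed bin recovers, possibly making a previously unavailable group available again. I would verify the invariant is preserved across each of these event types, treating it as an inductive claim over the sequence of events (the base case being the moment the first active group is opened, when there is a single incomplete group that is active, so the invariant holds vacuously for the non-active part).

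The heart of the argument is the transition analysis. When a new active group is needed (events of type (i) completion or type (ii) failure), the algorithm first looks for an incomplete \emph{available} group to reuse. Here I must check two things. When the old active group becomes complete (type (i)), it leaves the incomplete set entirely, so it does not add to the count; the newly chosen active group was either already incomplete (in which case the total number of incomplete groups does not increase) or freshly opened (in which case the previously-active complete group has just departed, so the net change is nonpositive or keeps us at the same level). When a bin of the active group fails (type (ii)), that group stays incomplete but now becomes non-active and \emph{unavailable}; crucially it contains a failed bin, so it satisfies the condition required of non-active incomplete groups in the invariant. The replacement active group is drawn from the available incomplete groups if one exists, or freshly opened otherwise, and in either case remains consistent with the invariant. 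For event (iii), recovery of a bin only moves a group from unavailable to available; it does not create a new incomplete group, and the group remains non-active until it is explicitly re-selected, so the count is unaffected.

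The step I expect to be the main obstacle is making precise the claim that \emph{every} non-active incomplete group contains at least one failed bin, and in particular ruling out the scenario where an incomplete group has fully recovered but has not yet been re-selected as active. I would argue that such a fully-recovered incomplete available group cannot coexist with a separate active group indefinitely: the only reason a new active group is ever opened fresh is that \emph{no} incomplete available group exists at the moment of selection, so whenever an available incomplete group is present it is immediately eligible to be (and, on the next selection, will be) chosen, and in the meantime the invariant's accounting must charge it against a failed bin. The cleanest way to close this gap is to strengthen the invariant so that it is maintained \emph{at every event boundary}: at the instant a new active group is selected, any available incomplete group would have been consumed, so immediately after selection the only incomplete available group is the active one, and every other incomplete group is unavailable (hence contains a failed bin). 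I would then confirm that subsequent recovery events, which are the only way a non-active group can regain availability, trigger no violation because such recoveries do not by themselves increase the incomplete count, and the next active-group selection will absorb the now-available group. This careful ordering of the bookkeeping is where the real care is required; the counting conclusion itself is then routine.
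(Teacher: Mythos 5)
Your plan is built on the invariant ``among all incomplete groups, exactly one is active and every other one contains a currently failed bin,'' and you correctly sense that this is the weak point --- but the patch you propose does not close it. The invariant is simply false: a non-active incomplete group can have \emph{all} of its failed bins recover and then sit there available, incomplete, and non-active until the algorithm next needs a new active group. Your strengthened version (``immediately after a selection, the only available incomplete group is the active one'') is also false, because a selection consumes only \emph{one} available incomplete group; if two formerly-unavailable groups have both recovered by the time a selection happens, one becomes active and the other remains available, incomplete, and non-active with no failed bin to charge it against. (Concretely with $f=2$: group $A$'s bin fails, fresh $B$ is opened; $B$'s bin fails while $A$ is still down, fresh $C$ is opened; then $A$ and $B$ both recover. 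Now $A$ and $B$ are both available, incomplete, non-active.) So the ``charge every non-active incomplete group to a failed bin at every moment'' accounting cannot work.

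The repair is to give up on a per-event invariant and only bound the count at the moments it can increase. The number of incomplete groups goes up exactly when a \emph{fresh} group is opened, and the algorithm opens a fresh group only when no available incomplete group exists --- so at that instant every other incomplete group contains a failed bin, hence there are at most $f$ of them, hence at most $f+1$ afterwards; between fresh openings the count never rises. This is precisely the paper's proof, phrased as a contradiction: if a $(f+2)$'th incomplete group were ever initiated, then of the $f+1$ incomplete groups existing at that moment at least one would have to be available (only $f$ can contain failed bins), and the algorithm would have selected it instead of opening a fresh group. You state the key fact --- ``a new active group is opened fresh only when no incomplete available group exists'' --- but you then route it back through the false invariant rather than using it directly, so as written the argument does not go through.
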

\begin{proof}
Consider otherwise, that is, at some point, there are at least $f+2$ incomplete groups. Let $t$ denote the time at which the $(f+2)$'th group $G$ is initiated. There are at most $f$ groups that contain at least one failed bin at any given time, in particular, at time $t$. So, out of the $f+1$ incomplete groups at time $t$ (before $G$ is initiated), at least one group $G'$ has been incomplete and available. Therefore, $G'$ had to be selected as the new active group instead of $G$, a contradiction.\qed
\end{proof}

\subsection{Packing strategy}\label{sect:packstg}

\noindent 

We explain how the algorithm packs $(i,j)$-items inside their active group. The placement is slightly different for regular and small items:
\vspace*{2mm}

\noindent \textbf{Regular items.}
We describe how $(i,j)$-items are packed, where $i \leq 6$ and $j \leq \lfloor 6\eta\rfloor$. 
Each bin group for $(i,j)$-items, in particular the active group, is formed by $j + fi$ bins and has enough space for $ij$ items. The group becomes complete when $ij$ items are placed in it. 
There are $j$ primary bins $B_0, B_1, \ldots, B_{j-1}$ that are each partitioned into $i$ \emph{spots} of capacity $1/i$.
There are $fi$ standby bins formed by $f$ sets of bins, each containing $i$ standby bins. We use $\beta^k_0, \beta^k_1, \ldots \beta^k_{i-1}$ to denote the standby bins in the $k$'th set ($k\leq f$). Each standby bin is partitioned into $j$ spots of size $\frac{1}{j+\eta-1}$. This leaves a \emph{reserved space} of size $\frac{\eta-1}{j+\eta-1}$ in the bin. The spots in the standby bins are labeled from $0$ to $j-1$.

Let $a_t$ be the $t$'th item that is to be packed into the group ($0 \leq t \leq ij-1$). Let $w~=~(t~\mod~j)$ and $z = \lfloor t/j \rfloor$. 
Note that $w$ and $z$ are in the ranges $[0,j-1]$ and $[0,i-1]$, respectively. 
The algorithm places the primary replica of $a_t$ in the spot $z$ of the primary bin $B_w$ {in the active group}. Standby replicas of $a_t$ are placed in the spot $w$ of bins $\beta^1_z, \beta^2_z, \ldots , \beta^f_z$ {of the active group}.

\begin{example}
Figure~\ref{fig:algExp} illustrates packing items of class $(i,j)$, where $i=2$, $j=5$, and $f=3$ in a complete group.
There are $j=5$ primary bins $B_0, \ldots, B_4$, each partitioned into $i=2$ spots. 
There are {$f=3$} groups of standby bins, each containing $i=2$ bins that are partitioned into $j=5$ spots. 
For an item like $a_4$ (the red item), we have $w=4$ and $z=0$. The primary replica of $a_4$ is thus placed in the $0$'th spot of the $4$'th bin, while standby replicas of $a_4$ are placed in the $4$'th spot of the bin $\beta^k_0$  for {$k \in [1,3]$}. 
\end{example}

\begin{figure*}[!t]
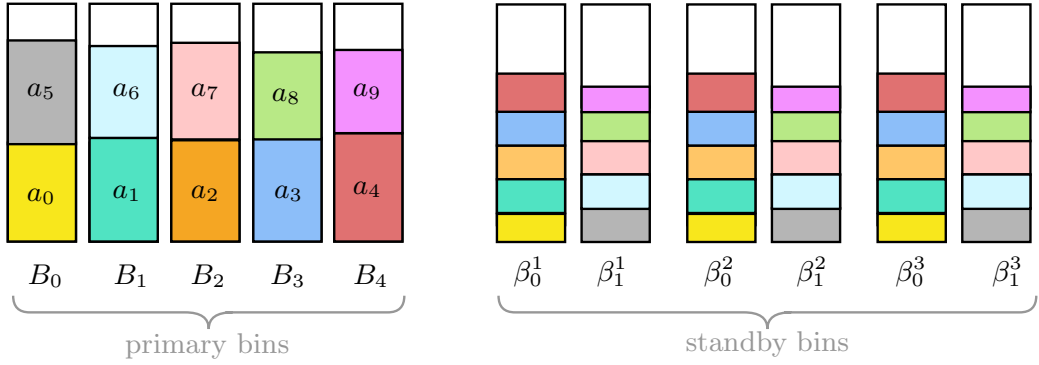

	\begin{center}
		\scalebox{1.2}{\algfigoct}
		\caption{An illustration of the \Alg packing for regular {$(i,j)$-}items {in a complete group}, where $i=2$, $j=5$, and {$f=3$}.} 
		\label{fig:algExp}
	\end{center}
\end{figure*}

\begin{example}
Figure~\ref{fig:apx} shows an incomplete group of bins opened for $(i,j)$-items where $i=2$, $j=5$, and  $f=3$. The group includes $j+fi=11$ bins. Each of the primary bins and standby bins has $i=2$ and $j=6$ spots, respectively, out of which the black spots have not been filled yet. The algorithm has placed items $a_0, \ldots a_6$ in their respective bins when the group was active (and {, hence,} available). At some time $t$, bins $B_0$ and $B_4$ failed. At this point, the group becomes unavailable, and the adjustment strategy processes $a_0$, $a_5$, and $a_4$ to assign new primary replicas for them. After time $t$, the group will not be active anymore { (because it is not available)}, and the algorithm does not place upcoming items in this group until it is selected as the active group later again; this requires the group to become available again, that is, $B_0$ and $B_4$ recover. 
\end{example}
\ \\ \ \\
\noindent \textbf{Small items.}
In order to pack replicas of small items, \Alg merges sets of {consecutive} small items into \emph{super-replicas} (SRs). 
Given that the size of small primary and standby replicas are respectively at most $\frac{1}{7-1/\eta}$ and $\frac{1}{7\eta -1}$, it is possible to group consecutive primary and standby replicas into SRs with sizes in the range $(\frac{1}{7-1/\eta},\frac{2}{7-1/\eta})$ and $(\frac{1}{7\eta -1},\frac{2}{7\eta -1})$, respectively. 
The algorithm maintains an (initially empty) \emph{open} primary SR of {capacity} $\frac{2}{7-1/\eta}$ and places consecutive primary replicas in the open SR until placing the next replica causes the total size of replicas in the SR {to exceed} its capacity. At this point, the SR is closed and a new SR is opened. 
Similarly, the algorithm maintains $f$ (initially empty) open standby SRs, each of capacity $\frac{2}{7\eta-1}$, and places the $f$ standby replicas in these bins until a replica does not fit the open SRs, at which point the~$f$ open SRs are closed and {a set of $f$} new SRs are opened. Since the primary and standby SRs are opened and closed at the same time, we can think of a set of small replicas that are placed in an SR as a single {regular} replica.
In what follows, we describe how the newly opened SRs are placed into bins.

Each group $G$ of bins opened for small items contains $f$ standby bins that mirror each other\footnote{Two bins ``mirror" each other iff they contain the same set of replicas.} and one primary replica which is ``committed" to $G$. There are also ``free" primary bins that are not committed to any group.
Upon the arrival of a small item, its standby replicas are placed into the open SRs located on (mirroring bins) of the active group, and its primary replica is placed into the open SR located on the committed bin. As before, if any bin of the active group fails, the group becomes unavailable, and the algorithm declares another incomplete, available group as the active group (or creates a new one if no such group exists). 
 There is a reserved space of size $\frac{2(\eta-1)}{7\eta -1}+\frac{2}{7\eta -1}=\frac{2(\eta-1)}{7\eta -1}$ inside each of the $f$ mirroring bins of a group, which is used for the promotion of the standby SRs when required. 
When it is needed to open a new SR (when the new replicas do not fit in the open SR), \Alg first places $f$ standby SRs and then a single primary SR as follows.
For the $f$ standby SRs, if the available space in the mirroring standby bins of the active group is at least $ \frac{2\eta}{7\eta -1}$, 
then the new SRs will be placed into the existing open standby bins. Otherwise, the active group gets complete, and {either} another incomplete, {available} group is selected as the active group, {or (if there is no such an available group) a new group is opened.} For placing a new primary SR, the algorithm first frees the bin committed to the active group, and then selects any free primary bin $B'$ that i) has an empty space of size at least $\frac{2}{7-1/\eta}$ and ii) is not sharing an SR with any of the $f$ standby bins in the active group. If no such bin $B'$ exists, the algorithm opens a new primary bin $B'$. The bin $B'$ is then declared as the new primary bin committed to the active group, where the new primary SR is placed. 

\begin{figure*}[!t]
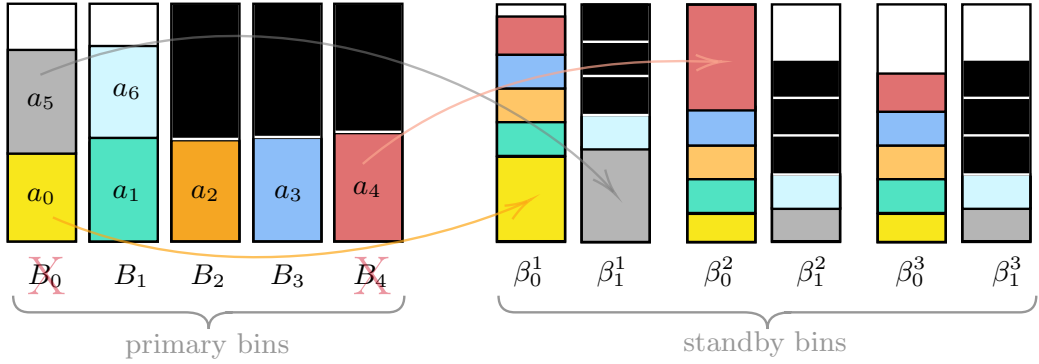

	\begin{center}
		\scalebox{1.2}{\unavailableGrp}
		\caption{An unavailable and incomplete group of bins for $(i,j)$-items, where $i=2, j=5$, and  $f=3$.}
		\label{fig:apx}
	\end{center}
\end{figure*}

\begin{example}
Figure~\ref{fig:smallSR} illustrates how \Alg packs small items, where $\eta=2$ and $f=3$. Replicas of items that form the same SR have the same color. SRs of primary replicas have sizes in the range $(\frac{1}{7-1/\eta},\frac{2}{7-1/\eta})$, that is, $(2/13,4/13)$. Standby SRs have sizes in the range 
$(\frac{1}{7\eta -1},\frac{2}{7\eta -1})$, that is, $(1/13,2/13)$. The algorithm maintains a reserved space of  $\frac{2(\eta-1)}{7\eta -1} = 2/13$ inside each standby bin. The initial small items are placed into the yellow SRs. Note that the standby SRs are placed in $f=3$ open mirroring bins ($\beta_0^1, \beta_0^2$, and $\beta_0^3$) and the primary SR is placed in bin $B_1$, which is the initially committed bin to the active group. At some point, replicas of an item $x$ do not fit in the standby SRs of the active group (the total size of primary and standby yellow replicas exceeds $\frac{4}{13}$ and $\frac{2}{13}$, respectively, if replicas of $x$ are included in the yellow SRs.). As such, a new SR (the light green SR) is opened for $x$. Given that the empty and non-reserved space in the mirroring bins of the current group is enough to fit another standby SR of size at most 2/13, the standby replicas of the new SR are placed in the open mirroring bins. Meanwhile, $B_1$ is freed, and a new bin $B_2$ is selected as the new committed primary bin to the active group, where the new primary SR is placed. Note that $B_1$ is freed because it already shares yellow SR with the standby bins of the active group. Similarly, the standby replicas of the subsequent SRs (of colors orange, blue, and pink) are placed in the mirroring bins of the active group while their primary replicas are placed in separate primary bins (each becoming the new committed bins upon freeing the previous one). At some point, replicas of some item $y$ cannot fit in the current (pink) SRs. So, a new SR (of color red) needs to be opened. The current mirroring bins do not have an available space of $\frac{2}{7\eta -1} = 2/13$; as such, the active group gets complete, and a new group with $f=3$ standby bins $(\beta_1^1, \beta_1^2$, and $\beta_1^3$) is opened, where the new standby SRs are placed. At this point, the primary replica of the new SR (the red SR) can be placed in any of the primary bins {$B_1$ to $B_6$}, that is, any of $B_1$ to $B_6$ can be selected as the committed bin to the new active group. This is because none of these primary bins are related to this new set of open standby bins. In the figure, $B_1$ is initially selected as the primary bin committed to the new group.
\end{example}

\begin{figure*}[!t]
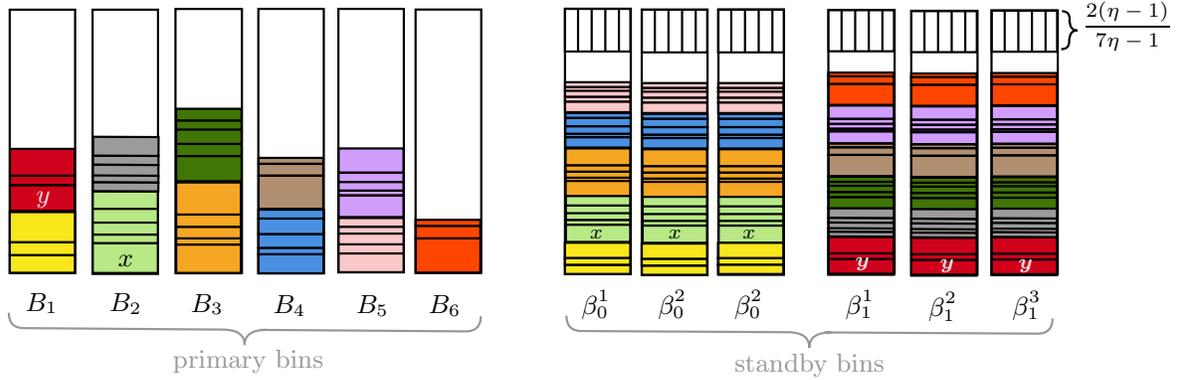

	\begin{center}
	\scalebox{1.1}{\smallSRR}
	\caption{An illustration of the \Alg packing for placing small items, where $f=3$.}
	\label{fig:smallSR}
	\end{center}
\end{figure*}
\subsection{Adjustment strategy}\label{sect:adj}
We describe an adjustment strategy that ensures a primary replica is available at any given time during the execution of \Alg. 

 Two bins in the packing of \Alg are said to be \emph{related through item $x$} if they both contain {(primary or standby)} replicas of item $x$. Clearly, any two related bins should belong to the same group of bins.  
 
 \begin{lemma}\label{lemma:help}
	In a packing maintained by \Alg, if a primary bin $B_p$ is related to a standby bin $\beta_1$ through an item $x$ and to a standby bin $\beta_2$ through an item $y\neq x$, then $\beta_1$ and $\beta_2$ are not related through any item. 
 \end{lemma}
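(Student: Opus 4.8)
The plan is to exploit that relatedness is a \emph{static} property: it depends only on which bins physically hold the replicas of each item, and placements never move (a failure or promotion changes a replica's status, not its location), so I can argue entirely from the placement rules of Section~\ref{sect:packstg}. Since $B_p$ contains both $x$ and $y$ and every bin holds replicas of a single class, $x$ and $y$ have the same class, and $B_p,\beta_1,\beta_2$ all belong to the bins of that class. Hence it suffices to treat the regular and the small case separately; in the small case I treat each super-replica (SR) as the atomic unit, as licensed by the remark that an SR behaves like a single regular replica.

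For the regular case I use the explicit index formulas. Write $B_p=B_w$. Both $x$ and $y$ have their primary replicas in $B_w$, so their item indices are $\equiv w \pmod j$; they therefore occupy distinct spots $z_x,z_y$ of $B_w$, and $x\neq y$ forces $z_x\neq z_y$. By the placement rule the standby replicas of $x$ lie in $\beta^1_{z_x},\dots,\beta^f_{z_x}$, so $\beta_1=\beta^{k_1}_{z_x}$ for some $k_1$, and likewise $\beta_2=\beta^{k_2}_{z_y}$. A standby bin $\beta^k_z$ receives standby replicas only from items whose index $t$ satisfies $\lfloor t/j\rfloor=z$; thus the items with a replica in $\beta_1$ (those of $z$-index $z_x$) are disjoint from those with a replica in $\beta_2$ (those of $z$-index $z_y$), because $z_x\neq z_y$. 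Consequently $\beta_1$ and $\beta_2$ share no item and are unrelated.

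For the small case, let $X$ and $Y$ be the (distinct) SRs of $x$ and $y$: their primary versions sit on $B_p$ and their standby versions on $\beta_1$ and $\beta_2$. Each standby SR is placed on the $f$ mirroring bins of exactly one group, so $\beta_1$ is a mirroring bin of the group $G_X$ that received $X$ and $\beta_2$ of the group $G_Y$ that received $Y$. The crux is to show $G_X\neq G_Y$. Suppose instead $G_X=G_Y=G$, and assume without loss of generality that $X$ was opened before $Y$. When $Y$'s primary SR was placed, the algorithm committed a primary bin to the then-active group $G$ subject to condition (ii): the committed bin must not already share an SR with $G$'s standby bins. But $B_p$ carries the primary version of $X$, and $X$ already resided in $G$'s standby bins, so $B_p$ shared $X$ with them and could not have been selected for $Y$ — contradicting that $B_p$ holds $Y$'s primary version. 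Hence $G_X\neq G_Y$, and since the standby SRs on the mirroring bins of two different groups are disjoint, $\beta_1$ and $\beta_2$ share no SR, and therefore no small item, so they are unrelated.

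I expect the regular case to be a routine consequence of the index formulas, so the real work is the small-item case, where everything hinges on condition (ii) of the primary-SR placement rule. The point I will treat carefully is that a group can regain active status after a failure and then receive further SRs at a later time; I will note that condition (ii) is re-checked against the \emph{then}-active group's standby bins at every primary-SR placement, so the invariant ``$B_p$ shares at most one SR with any single group'' — and hence the deduction $G_X\neq G_Y$ — survives these re-activations. I will also dispatch the two degenerate-reduction remarks at the outset: that each bin carries a single class, and that the hypotheses force $\beta_1\neq\beta_2$ (via $z_x\neq z_y$ in the regular case and via $G_X\neq G_Y$ in the small case), so the conclusion is non-vacuous.
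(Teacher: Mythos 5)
Your proof is correct, and it reaches the conclusion by a somewhat different decomposition than the paper's. The paper first establishes two auxiliary facts --- (i) related standby bins mirror each other, and (ii) a primary bin shares replicas of at most one item with any standby bin --- and then concludes by contradiction: if $\beta_1$ and $\beta_2$ were related they would mirror each other, so $B_p$ would share both $x$ and $y$ with $\beta_1$, violating (ii). You instead argue directly that $\beta_1$ and $\beta_2$ draw their contents from disjoint pools: in the regular case they hold the standby replicas of items with distinct indices $z=\lfloor t/j\rfloor$ (forced by $x$ and $y$ occupying different spots of the same primary bin), and in the small case they are mirroring bins of two distinct groups (forced by the requirement that a committed primary bin not already share an SR with the active group's standby bins --- exactly the fact the paper invokes for its claim (ii) on small items). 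Your route skips the mirroring property entirely and yields the slightly stronger conclusion that $\beta_1$ and $\beta_2$ share no item at all, while the paper's route isolates invariants that are conceptually reused in the adjustment-strategy proof; both ultimately rest on the same two placement facts, so the difference is organizational rather than substantive. One shared caveat, present in the paper's own proof as well: for small items the argument (and the lemma itself) must be read at the level of super-replicas, since two distinct small items packed into the same SR would falsify the literal item-level statement; your explicit adoption of SRs as the atomic unit, and your check that the no-shared-SR condition is re-evaluated each time a group regains active status, handles this correctly.
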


\begin{proof}
	Let $\mathcal{P}$ be the packing maintained by \Alg.
	First, we show that (i) any pair of standby bins that are related mirror each other, and (ii) any {primary} bin in $\mathcal{P}$ shares replicas of at most one item with any standby bin.
	For (i), note that the algorithm places the standby replicas of each regular item in bins that mirror each other. The same holds for the small replicas because the standby SRs are placed into mirroring bins (see Figures~\ref{fig:algExp}~and~\ref{fig:smallSR}).  
	For small items, (ii) follows directly from the definition of the \Alg. This is because the algorithm places each primary SR into a primary bin that does not share an SR with any (standby) bin of the active group. 
	We use proof by contradiction to prove (ii) for the regular items. 
	Assume a primary bin $B$ includes primary replicas of items $x$ and $y$ of class $(i,j)$ while a standby bin $\beta'$ also includes replicas of $x$ and $y$. Given that $x$ and $y$ are regular items, they belong to the same set of the $ij$ items that are placed in the same group (with $j+fi$ bins) in $\mathcal{P}$. Let $t_x$ and $t_y$ respectively denote the indices of $x$ and $y$ in the group ($t_x \neq t_y$). Since $x$ and $y$ have their primary replicas in $B$, we should have $(t_x \mod j) = (t_y \mod j)$. Similarly, since their standby replicas are placed in $\beta'$, we should have $\lfloor t_x /j \rfloor = \lfloor t_y /j \rfloor$. This contradicts  $t_x \neq t_y$.

	Provided with (i) and (ii), we are ready to prove the lemma. Suppose the lemma does not hold, that is, a primary bin $B_p$ in $\mathcal{P}$ is related to a standby bin $\beta_1$ through an item $x$ and to standby bin $\beta_2$ through an item $y$ ($y\neq x$), while $\beta_1$ and $\beta_2$ are also related. Since $\beta_1$ and $\beta_2$ are related, by (i), they should mirror each other, that is, $\beta_1$ includes replicas of both $x$ and $y$, and so does $\beta_2$. Thus, $B_p$ shares replicas of both $x$ and $y$ with $\beta_1$ (and $\beta_2$), contradicting (ii). \qed
	\end{proof}

\noindent We use Lemma~\ref{lemma:help} to develop the adjustment strategy of \Alg:

\begin{theorem}
	\ ~ There is an adjustment strategy that ensures the packing of ~ ~ \Alg stays valid, that is, a primary replica of each item is always present in a non-failed bin.
	\end{theorem}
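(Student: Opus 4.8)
The plan is to reduce the theorem to the existence, at every moment in time, of the injective mapping $h$ sketched in the overview of Section~\ref{sect:adj}, and then to prove that such an $h$ always exists by a matching argument. At any instant, call a primary replica \emph{endangered} if it currently sits in a failed bin. I would first argue that it suffices to exhibit, for the current configuration of at most $f$ failed bins, an injective map $h$ sending each endangered primary replica $x$ to a distinct non-failed standby bin that contains a standby replica of $x$. Given such an $h$, the adjustment strategy promotes the standby replica of $x$ inside $h(x)$ to primary status. Injectivity guarantees that each standby bin is asked to promote at most one of its replicas, so the single unit of reserved space built into every standby bin (size $\frac{\eta-1}{j+\eta-1}$ for a regular $(i,j)$-bin, and the analogous reserved space for a small-item bin) absorbs the load increase from $x/\eta$ to $x$ without overflow; for regular items this is the direct computation $x-x/\eta=\frac{\eta-1}{\eta}x\le\frac{\eta-1}{j+\eta-1}$ using the class bounds of Table~\ref{tbl:weighting}. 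Upon recovery of a failed bin, the promotions it caused are reversed and $h$ is recomputed, restoring the invariant. Since at most $f$ bins are failed at once, proving that $h$ exists for \emph{every} configuration of at most $f$ failures establishes validity throughout the entire (online, still-in-progress) packing.

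For the existence of $h$ I would phrase it as a system of distinct representatives and invoke Hall's theorem. Build the bipartite graph whose left vertices are the endangered primary replicas, whose right vertices are the non-failed standby bins, and with an edge $\{x,\beta\}$ whenever $\beta$ holds a standby replica of $x$. An injective $h$ is exactly a left-saturating matching, so it suffices to verify that every set $S$ of endangered primaries satisfies $|N(S)|\ge|S|$.

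The heart of the argument, and the step I expect to be the main obstacle, is verifying Hall's condition, and this is exactly where Lemma~\ref{lemma:help} enters. Facts (i) and (ii) established in its proof give the needed structure: related standby bins mirror each other, so the standby bins partition into \emph{mirror-classes} (the bins $\beta^1_z,\dots,\beta^f_z$ for fixed $z$ in the regular case, or the $f$ mirroring bins of a small-item group), and bins in distinct mirror-classes share no item. Each endangered primary $x$ has all $f$ of its standby replicas inside a single mirror-class $M(x)$ and is therefore adjacent to precisely the non-failed bins of $M(x)$. Hence the bipartite graph is a disjoint union of complete-bipartite pieces, one per mirror-class, and Hall's condition need only be checked within each mirror-class $M$ separately.

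Finally I would do the per-class count. Fix a mirror-class $M$ and let $e_M$ be the number of endangered primaries mapping into it. These primaries occupy \emph{distinct} primary bins: if two of them, $x\neq y$, shared a primary bin $B$, then, since every bin of $M$ mirrors the others, any bin of $M$ would contain standby replicas of both $x$ and $y$, so $B$ would share two items with that bin, contradicting fact (ii). Consequently $e_M$ is at most the number $p$ of failed primary bins holding these replicas. Writing $q$ for the number of failed bins lying inside $M$, the $p$ failed primary bins and the $q$ failed standby bins are distinct bins of the same group and all are failed, so $p+q$ is at most the total number of failed bins, which is at most $f$. Therefore $e_M\le p\le f-q$, and $f-q$ is exactly the number of non-failed bins of $M$ available as representatives. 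This verifies Hall's condition in each component, yielding the injective $h$ and completing the proof. The two delicate points to nail down carefully are the reserved-space bookkeeping (one promotion per bin, a routine computation from the class bounds) and the genuineness of the per-mirror-class decomposition, both of which rest entirely on Lemma~\ref{lemma:help}.
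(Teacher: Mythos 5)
Your proof is correct, but it takes a genuinely different route from the paper's. The paper maintains $h$ incrementally: it fixes an ordering of the critical bins, greedily assigns each primary replica $a$ in turn, and argues via a counting bookkeeping (with quantities $f_p$, $r_p$, $k$, $q$) that each previously processed bin can have marked at most one of the $f$ standby bins holding replicas of $a$ --- this is where Lemma~\ref{lemma:help} is invoked --- so an unmarked, non-failed candidate always survives. You instead prove that for an \emph{arbitrary} configuration of at most $f$ failed bins the injective $h$ exists outright: facts (i) and (ii) from the proof of Lemma~\ref{lemma:help} force the bipartite graph between endangered primaries and standby bins to decompose into disjoint complete bipartite components indexed by mirror-classes, so Hall's condition collapses to the per-class count $e_M \le p \le f-q$. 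Your route buys a cleaner statement (no processing order, no marked/unmarked state to maintain, recovery handled for free by recomputation) and makes the structural reason for feasibility explicit; the paper's sequential version more closely mirrors an implementable adjustment that only re-maps the replicas touched by the latest failure or recovery. Two points to tighten. First, for small items the whole argument must be carried out at the granularity of super-replicas: fact (ii) is false for individual small items sharing an SR, but the reserved space $\frac{2(\eta-1)}{7\eta-1}$ is sized for promoting exactly one standby SR per bin, so the SR-level matching is the right object (and is what the paper implicitly uses). Second, your phrase ``distinct bins of the same group'' is inaccurate for small primary bins, which are not tied to a single group; but all the count needs is that primary bins and standby bins are disjoint sets of failed bins, so $p+q\le f$ stands and the conclusion is unaffected.
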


\begin{proof}
  
We describe an adjustment strategy that maintains an injective mapping {$h$} that maps each primary replica $x$ placed originally in a failed primary bin into a {non-failed} standby bin {$h(x)$} which hosts a standby replica of the same item. 
For each primary replica $x$ in a failed primary bin, the standby replica in {$h(x)$} replaces $x$ as the new primary replica. Since the mapping is injective, at most one replica in each standby bin will be promoted to a primary replica. Given that each standby bin with a replica of size $s$ has an empty space of at least $(\eta-1)s$, no bin is overloaded. In what follows, we describe an adjustment strategy that maintains the desired injective mapping as bins fail and recover. In this process, the standby bins that are in the range of {$h$} are referred to as ``marked" bins.

We describe how to maintain the mapping $h$ at time $t$.  
Suppose that such mapping is maintained in the previous $t-1$ steps. Let $f_p$ denote the number of primary bins that are failed (and not recovered) before time $t$ (we have $0 \leq f_p \leq f$). Suppose that out of these $f_p$ failed primary bins, $r_p$ bins are recovered at time $t$ ($0\leq r_p \leq f_p$). A bin $B$ is said to be \emph{critical} iff it fails at time $t$ while containing a primary replica.
Let $k$ denote the number of {critical} bins.
All primary bins that fail at time $t$ are critical ({non-failed} primary bins contain primary replicas). Marked standby bins 
that fail at time $t$ are also critical (they are in the range of {$h$} and hence a replica in them has replaced the primary replica of a failed primary bin). 
For the packing to stay valid, the primary replicas in the critical bins should be mapped to some {non-failed} bins. 

The adjustment algorithm first ensures that the primary replicas in the $r_p$ recovered primary bins are removed from the domain of {$h$}, and retain their primary status; this means the standby replicas in the range of $h$ that were previously upgraded to primary replicas become standby again and their bins become unmarked. At this point, the number of failed, unmarked standby bins is at most $f - (f_p-r_p+k)$; this is because, out of at most $f$ failed bins, $f_p-r_p$ of them are primary bins that are failed before $t$, and $k$ of them are critical and hence are either primary bins or marked standby bins.  

Consider an arbitrary ordering $(B_1, B_2, \ldots, B_{k})$ of the critical bins. We process the critical bins, one by one, in this order. When processing a bin $B_q$, we process primary replicas in {$B_q$} in an arbitrary order ({$q\leq k$}). Let $a$ be a replica in $B_q$ that is being processed, and let $A$ be the set of $f$ standby bins that include replicas of $a$. 
We need to map $a$ to an unmarked bin $\beta \in A$ and then mark $\beta$. We show that it is always possible to find such an unmarked bin $\beta$. 
Consider a previously processed bin $B'$ {(whose primary replicas are mapped)}, that is, either $B'$ failed previously at $t'<t$ or $B'$ is {$B_{q'}$} for {$q'<q$}. We claim that {during the process of} $B'$, at most one bin from $A$ has become marked. 
At the time $B'$ is processed, it has been critical and hence either a primary bin or a marked standby bin. 
If $B'$ was a {marked} standby bin, then it contained at most one primary replica (since the mapping is injective), and the claim holds. To prove the claim when $B'$ is a primary bin, consider otherwise, that is, assume two standby bins $\beta_x,\beta_{y} \in A$ {have been} marked {during the process of $B'$} . This means 
$B'$ is related to $\beta_x$ and $\beta_{y}$ through two different items. {On the other hand,} $\beta_x$ and $\beta_{y}$ are also related to each other (because they are both in $A$ and hence contain a replica of $a$). This is not possible, however, given the result in Lemma~\ref{lemma:help}. 

There are $f_p-r_p+q-1$ failed bins that are processed before $B_q$. By the above argument, processing any of these bins results in marking at most one standby bin from $A$. Therefore, at the time of processing $a$, at most $f_p-r_p+q-1$ standby bins from $A$ are previously marked. Note that $a$ is replicated on $f$ standby bins. As a result, there are at least $f-f_p+r_p-q+1$ unmarked standby bins that host standby replicas of $a$. Among these bins, at most $f-(f_p+r_p-k)$ bins are failed. So, there are at least $f-f_p+r_p-q+1 - (f-f_b+r_p-k)  = k-q+1$ non-failed and unmarked standby bins in~$A$. 
Given that {$q \leq k$}, there is at least one unmarked bin that $a$ can be mapped to.  \qed

\end{proof} 

\section{Competitiveness of \Alg}\label{sect:analysis}

In this section, we use a weighting argument to provide an upper bound for the competitive ratio of \Alg. 
We assign a \emph{weight} to each replica in the final packing of the algorithm. The weights are defined in a way that the total weight of replicas placed in each bin of the algorithm, except possibly a constant number of them, is at least 1. Therefore, if $w(\sigma)$ denotes the total weight of all replicas in the input sequence, the number of bins in the packing of \Alg is no more than $w(\sigma)+c$, for some constant $c$ independent of the input length (but possibly a function of parameters $f$ and $\eta$). At the same time, we show that any bin in an optimal packing has a weight at most $1.75$, which means the number of bins in an optimal packing is at least $w(\sigma)/1.75$. As such, the competitive ratio of \Alg will be at most $1.75$. 

\noindent \paragraph{{Weighting:}} For regular items, define the weight of a primary replica of class $i \ (\leq 6)$ as $1/i$, and the weight of standby replicas of class $j \ ( \leq \lfloor 6\eta \rfloor)$ as $1/j$. 
For small items, a primary or standby replica of size $x$ has weight $3x/2$ (see Table~\ref{tbl:weighting}).

\begin{lemma}\label{lemma:algWeight}
	The total weight of replicas in any bin of \Alg, except for at most a constant number of bins, is at least 1.
\end{lemma}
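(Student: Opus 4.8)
The plan is to classify every bin in the final packing of \Alg by the kind of replica it holds---regular primary, regular standby, small primary, or small standby---and, within each kind, to show that every \emph{completed} bin carries total weight at least $1$, while the number of bins that are \emph{not} completed is bounded by a quantity depending only on $f$ and $\eta$ (hence a constant independent of the input length $n$). Summing the four per-kind bounds then shows that all but constantly many bins have weight $\geq 1$, which is exactly the claim. Throughout I will use that the number of item classes is $6\lfloor 6\eta\rfloor+1$, a constant, so it suffices to bound the number of incomplete bins \emph{per class} by a constant.

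For regular items I would argue directly from the packing geometry. A complete primary bin of an $(i,j)$-group contains exactly $i$ primary replicas of class $i$, each of weight $1/i$, so its weight is $i\cdot(1/i)=1$; a complete standby bin contains $j$ standby replicas of class $j$, each of weight $1/j$, so its weight is $j\cdot(1/j)=1$. The only regular bins whose weight can fall below $1$ lie in incomplete groups, and by Lemma~\ref{lem:new} there are at most $f+1$ incomplete groups per class; since each such group uses $j+fi$ bins and there are constantly many classes, the total number of these exceptional bins is bounded by a constant depending only on $f$ and $\eta$.

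For small items I would first record that a super-replica of total (primary or standby) size $S$ has weight $\tfrac32 S$, since it is a union of small replicas each of weight $\tfrac32$ times its own size. A small primary bin becomes complete only when its free space drops below $\tfrac{2}{7-1/\eta}$, so the total size of its super-replicas exceeds $1-\tfrac{2}{7-1/\eta}=\tfrac{5\eta-1}{7\eta-1}$ and its weight exceeds $\tfrac{3}{2}\cdot\tfrac{5\eta-1}{7\eta-1}=\tfrac{3(5\eta-1)}{2(7\eta-1)}$. A small standby bin becomes complete only when its free space drops below the placement threshold $\tfrac{2\eta}{7\eta-1}$ (which budgets both a new standby super-replica and its reserved promotion space), so its total super-replica size again exceeds $1-\tfrac{2\eta}{7\eta-1}=\tfrac{5\eta-1}{7\eta-1}$ and the same weight bound holds. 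Finally $\tfrac{3(5\eta-1)}{2(7\eta-1)}\geq 1 \iff 15\eta-3\geq 14\eta-2 \iff \eta\geq 1$, which holds since $\eta>1$; thus every completed small bin has weight at least $1$.

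The main obstacle, and the step I would treat most carefully, is bounding the number of \emph{underfull} small-item bins. Incomplete standby groups are controlled exactly as in the regular case, via Lemma~\ref{lem:new}: at most $f+1$ incomplete standby groups, each with $f$ mirroring bins. The delicate part is the primary bins, because \Alg places each new primary super-replica into ``any'' qualifying free primary bin subject to condition~(ii) (not sharing a super-replica with the active standby group) and opens a new bin only when none qualifies. I would partition the underfull primary bins at termination into (a)~the bin currently committed to each incomplete group, (b)~bins blocked by condition~(ii) with respect to some incomplete standby group, and (c)~free, eligible, underfull bins. Category~(a) contributes at most $f+1$ bins; for~(b), each incomplete standby group holds at most $7\eta-1$ super-replicas (each of size exceeding $\tfrac{1}{7\eta-1}$) and hence shares super-replicas with at most $7\eta-1$ distinct primary bins, giving a constant bound. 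The crux is showing (c) is also bounded: since the algorithm never opens a fresh primary bin while an eligible free one exists, and a freshly placed primary super-replica immediately blocks its own bin from the active group, I would argue by a last-placement/invariant argument that only a constant number of eligible free primary bins can remain below the completion threshold when the input ends. Making this last bound rigorous---disentangling the repeated freeing and re-eligibility of primary bins forced by condition~(ii)---is where the real work lies. Once it is in place, every small primary bin outside categories~(a)--(c) is complete with weight $\geq 1$, and combining all four bin kinds shows that the packing of \Alg uses at most $w(\sigma)+c$ bins for a constant $c=c(f,\eta)$.
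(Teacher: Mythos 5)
Your treatment of regular bins and of small standby bins matches the paper's proof essentially step for step (your direct check that $\tfrac{3(5\eta-1)}{2(7\eta-1)}\geq 1$ for $\eta>1$ is, if anything, a cleaner computation than the paper's detour through $\tfrac{5\eta-1}{7\eta-1}\geq 2/3$). The one genuine gap is exactly the step you flag yourself: bounding your category~(c), the free, eligible, underfull primary bins for small items. As you observe, a bin blocked by condition~(ii) at one moment can later become eligible again when the active group changes, so a termination-time partition into (a), (b), (c) does not by itself control (c), and you leave the ``last-placement/invariant argument'' as a promissory note rather than a proof. Since this is the only place where a small primary bin could fail to reach weight $1$, the lemma is not yet established as written.

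The fix is simpler than the ``disentangling'' you anticipate, and it is what the paper does: freeze time at the moment the \emph{last} primary small bin is opened, instead of partitioning at termination. At that instant the algorithm found no qualifying free bin, so every previously opened primary bin is either committed to one of the at most $f$ other incomplete groups, or one of the at most $m$ bins sharing an SR with the $f$ standby bins of the then-active group --- where $m\leq 5\eta+1$ because each standby bin has non-reserved capacity $\tfrac{5\eta+1}{7\eta-1}$ and every standby SR has size more than $\tfrac{1}{7\eta-1}$ --- or else is filled above $1-\tfrac{2}{7-1/\eta}$. Bins only gain SRs afterwards and no further primary bins are opened, so at termination at most $m+f+1$ primary small bins are underfull; no invariant about repeated freeing and re-eligibility is needed. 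With that single substitution your argument closes and coincides with the paper's.
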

\begin{proof}
First, we investigate regular bins. {Consider the bins in the complete groups (see Figure~\ref{fig:algExp}). In any such group,} a primary bin of class $i \leq 6$ includes $i$ replicas, each of weight $1/i$. Similarly, a standby bin of class $j \leq \lfloor 6\eta \rfloor$ includes $j$ replicas, each of weight $1/j$. Therefore, all {regular} bins, except for those in the incomplete groups, have weight~1. 
{We show that the total number of bins inside incomplete groups is a constant independent of the input length.} By Lemma~\ref{lem:new}, there are at most $f+1$ incomplete groups for $(i,j)$-items. There are  
$j+fi \leq 6(\eta +f)$ bins inside each group. So, there are at most $6(f+1)(\eta +f)$ partially-filled bins for $(i,j)$-items. Given that $i\leq 6$ and $j\leq 6\eta$, there are at most $36\eta$ possible pairs of $(i,j)$. In total, the number of partially filled bins for regular items is at most $(36\eta) 6(f+1)(\eta +f) = O(1)$. 
In summary, the total weight of items in any regular bin, except for at most $O(1)$ of them, is at least~1.

Next, we look into small items. Let $x$ be the primary SR that causes opening the last primary small bin. Also, let $m$ be the number of standby SRs packed in the standby bins of the active group at the time $x$ was {placed}. 
There are $m$ primary bins that are related to the $f$ standby bins, 
and thus cannot host $x$. By Lemma~\ref{lem:new}, there are up to $f$ incomplete groups other than the active group. The primary bins committed to these groups also cannot host $x$. 
The remaining primary bins could not host $x$ only because they did not have enough space. So, all primary small bins, except at most $m+f$ of them, are filled to a level of {at least} $1-\frac{2}{7-1/\eta} = \frac{5\eta-1}{7\eta-1} \geq 2/3$. 
 Given that any small item of size {$s$} has weight {$1.5s$}, the total weight of replicas in any of these bins is at least $ 1.5 \ (2/3) = 1$. Next, we show that $m$ is a constant with respect to the input length.
 Each standby bin has a non-reserved space of size $1-\frac{2(\eta-1)}{7\eta-1} = \frac{5\eta+1}{7\eta-1}$ which is used to pack SRs of size at least  $\frac{1}{7\eta-1}$. As such, we have $m \leq 5\eta+1$. {So, all primary small bins, except for at most $5\eta+1+f \in O(1)$ of them, have weight at least~1.} 
Standby small bins have a reserved space of $\frac{2(\eta-1)}{7\eta -1}$.  Except for the bins inside the incomplete groups, other bins have an additional empty space of at most $\frac{2}{7\eta -1}$, giving them a total empty space of {at most} $\frac{2\eta}{7\eta -1}$. 
By Lemma~\ref{lem:new}, there are up to $f+1$ incomplete groups, each containing $f$ mirroring bins. 
Therefore, the filled space in each {standby} bin, {except for at most $f(f+1) = O(1)$ of them,} is at least $\frac{5\eta-1}{7\eta -1}\geq 2/3$. {Given that any standby small replica of size $s$ has weight {$1.5s$},} the weight of any of these standby bins is then at least $1.5 \ (3/2)  = 1$.
%
\qed
\end{proof}

\vspace*{1mm} 
\begin{lemma}\label{lemma:optWeight}
	The total weight of items in any bin of an optimal packing is at most 1.75.
\end{lemma}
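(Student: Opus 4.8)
The plan is to fix a single bin $B$ of an optimal (hence valid and fault‑tolerant) packing and bound the total weight of its replicas by $7/4$, using the density column of Table~\ref{tbl:weighting} together with the reserved space that fault tolerance forces on every valid bin. Write $P$ for the total size of the primary replicas in $B$, and let the standby replicas in $B$ have sizes $s_1\ge s_2\ge\cdots\ge s_m$ with total $\Sigma=\sum_k s_k$. The first step is to record the structural constraint that validity imposes on $B$. Since the $f+1$ replicas of an item lie in distinct bins, an adversary can fail the primary bin of the item owning $s_1$ together with that item's other $f-1$ standby bins (a total of $f$ bins), which forces the copy in $B$ to be promoted from $s_1$ to $\eta s_1$. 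For $B$ not to overflow we therefore need
\[ P + \Sigma + (\eta-1)\,s_1 \le 1, \]
i.e.\ every valid bin must leave room to promote its largest standby. This reserved‑space inequality is the lever that tames the large standby densities appearing in Table~\ref{tbl:weighting}.

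Next I would isolate the only replicas whose density exceeds $3/2$ in a dangerous way, namely the weight‑$1$ replicas: a primary replica of class $1$ (size $p>1/2$, density below $2$) and a standby replica of class $1$ (size $s>\tfrac{1}{\eta+1}$, promoted footprint $\eta s>\tfrac{\eta}{\eta+1}>1/2$). Call such a replica \emph{heavy}. Using the displayed inequality I would show that $B$ contains at most one heavy replica: two class‑$1$ primaries already give $P>1$; two class‑$1$ standbys give $\eta s_1+s_2>1$; and a class‑$1$ primary together with a class‑$1$ standby give $p+\eta s_1>\tfrac12+\tfrac12=1$. Each of these contradicts the reserved‑space inequality.

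With the heavy replica (if any) set aside, the core estimate is that the remaining replicas behave as if they had density $3/2$ against the budget on the left‑hand side of the reserved‑space inequality. For primaries of class $i\ge 2$ and for small primaries this is immediate, since their density is at most $3/2$. For standbys the point is that the class‑$j$ weight $1/j$ must be charged against $s_k$ plus the single reserved slot $(\eta-1)s_1$; the aggregate bound to prove is $\sum_k 1/j_k\le \tfrac32\bigl(\Sigma+(\eta-1)s_1\bigr)$. A heavy class‑$1$ primary contributes weight $1\le \tfrac32 p+\tfrac14$ because $p>1/2$, and likewise a heavy class‑$1$ standby contributes $1\le \tfrac32\,\eta s_1+\tfrac14$ because $\eta s_1>1/2$; combining either with the density‑$3/2$ bound on the remaining replicas and the reserved‑space inequality collapses the total weight to at most $\tfrac32\cdot 1+\tfrac14=\tfrac74$.

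The main obstacle is the aggregate standby estimate $\sum_k 1/j_k\le \tfrac32\bigl(\Sigma+(\eta-1)s_1\bigr)$, because the ``medium'' standby classes with $2\le j<2\eta$ have raw density strictly between $3/2$ and $2$, so no single such standby can be charged at density $3/2$ against its own size alone; their surplus weight must be absorbed collectively by the one reserved promotion slot $(\eta-1)s_1$. I expect this to require a short optimization showing that the per‑replica surplus $\tfrac1{j_k}-\tfrac32 s_k$ is largest as $s_k\to\tfrac1{\eta+j_k}$, and that the number of medium standbys that can coexist is limited by $\Sigma\le 1$, so that the total surplus never exceeds $\tfrac32(\eta-1)s_1$. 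The most delicate sub‑case is a heavy class‑$1$ standby accompanied by medium standbys, where the reserved slot is consumed by the heavy replica itself and the clean density‑$3/2$ charging is unavailable; here I would argue directly that the footprint $\eta s_1>1/2$ caps the remaining budget below $1/2$, so that even the uncovered surplus keeps the total below $7/4$, matching the tight instance of a single class‑$1$ primary of size just above $1/2$ filled out by density‑$3/2$ items.
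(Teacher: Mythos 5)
Your overall strategy is essentially the paper's, reorganized: both proofs hinge on the validity constraint that a bin must reserve $(\eta-1)s_1$ of empty space so that its largest standby replica can be promoted, on the observation that a bin can hold at most one weight-$1$ replica (a class-$1$ primary or a class-$1$ standby), and on density bounds of $3/2$ (resp.\ $\frac{j+\eta}{j}$) for the remaining replicas. The paper cases on the class of the largest standby replica (none / regular / small) and in each case multiplies the total-size bound $\frac{j+1}{j+\eta}$ by the maximum density of the replicas present; your scheme of charging everything at density $3/2$ against the budget $P+\Sigma+(\eta-1)s_1\le 1$ and adding a $+\frac14$ correction for the unique heavy replica is the same accounting in different clothes. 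Your preliminary claims (the reserved-space inequality, and that two heavy replicas cannot coexist) are correct.

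However, two steps are asserted rather than proved, and one of them, as sketched, does not close. First, the aggregate estimate $\sum_k 1/j_k\le\frac32\bigl(\Sigma+(\eta-1)s_1\bigr)$ for standbys of class $j_k\ge 2$ is left as ``a short optimization.'' It is true, and the proof is short: since $j_k\ge j_1$ and $\frac{j+\eta}{j}$ is decreasing in $j$, one has $\sum_k 1/j_k<\frac{j_1+\eta}{j_1}\Sigma$, and the desired inequality reduces, via $\Sigma\le 1-(\eta-1)s_1$ and $s_1>\frac{1}{j_1+\eta}$, to $\frac{2\eta-j_1}{2j_1}\le\frac{\eta-1}{j_1}$, i.e.\ $j_1\ge 2$. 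You should include this computation. Second, and more seriously, your treatment of the sub-case with a heavy class-$1$ standby relies on ``the footprint $\eta s_1>1/2$ caps the remaining budget below $1/2$.'' That bound is too weak: the remaining replicas may include class-$2$ standbys of density up to $\frac{\eta+2}{2}$, and $\frac12\cdot\frac{\eta+2}{2}=\frac{\eta+2}{4}>\frac34$ for every $\eta>1$, so the total would not be capped at $7/4$. The argument only closes with the sharper budget $1-\eta s_1<1-\frac{\eta}{\eta+1}=\frac{1}{\eta+1}$, since $\frac{1}{\eta+1}\cdot\frac{\eta+2}{2}=\frac{\eta+2}{2(\eta+1)}\le\frac34$ holds exactly when $\eta\ge1$; this is precisely the paper's corresponding case. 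With these two repairs your proof goes through.
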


\begin{proof}
	Define the \emph{density} of each item as the ratio between the weight and the size of the item.
	A primary replica of class $i\leq 6$ has a size in the range \mbox{$(\frac{1}{i+1}, \frac{1}{i}]$} and weight $1/i$, which gives a density of at most $\frac{i+1}{i}$. 
	Similarly, standby replicas of class $j\leq \lfloor 6\eta \rfloor$ have size in the range \mbox{$(\frac{1}{j+\eta}, \frac{1}{j+\eta-1}]$} and weight $1/j$, giving them a density of at most $(j+\eta)/j$.
	Small replicas (both primary and standby) have a density of 3/2. 
	We consider three possible cases and show that the total weight of items in any bin $B^*$ of an optimal packing is at most 1.75 in each case. 
	To follow the proof, it helps to consult  Table~\ref{tbl:weighting}.
    \vspace{3mm}
    
	\noindent \textbf{case 1: no standby replica in $B^*$:  } 
    Suppose $B^*$ does not include any standby replica. 
    In this case, $B^*$ includes 0 or 1 primary replica of class~1 (it cannot include more than 1 such replica since all replicas of class~1 have sizes larger than 1/2). If it includes no replica of class~1, the density of each of the items (of other classes) is at most 1.5, giving a total weight of at most 1.5 for items in $B^*$. If $B^*$ includes one item of class~1 (with weight 1), the total size of other items will be less than 1/2, and since their density is at most 3/2, their total weight will be no more than $3/2 \cdot 1/2 = 3/4$, giving a total weight of at most $1+3/4=1.75$ for items in $B^*$. 
    \vspace{3mm}
    
    \noindent \textbf{case 2: some standby replicas in $B^*$ are regular:\  } 
	Suppose $B^*$ includes at least one regular standby replica. Let $x$ be the largest standby replica in $B^*$ and $j$ denote the class of $x$; we have $1\leq j \leq \lfloor 6\eta \rfloor$. 
	There should be enough empty space in $B^*$ so that if {all $f$ bins} containing other replicas of $x$ are failed, $x$ can be declared as a primary replica. Increasing the size of $x$ by a factor $\eta$ should not cause an overflow, that is, there should be empty space of at least $(\eta-1)x > (\eta-1)/(j+\eta)$ in $B^*$ (recall that replicas of class~$j$ are of sizes at least $1/(j+\eta)$). So, the total size of items in $B^*$ is {less than} $1-\frac{\eta-1}{j+\eta}=\frac{j+1}{j+\eta}$. 
	There are two cases to consider: either $j=1$ or $j \geq 2$: \vspace{1mm}
	
	\noindent \textbf{i)} Suppose $j=1$, that is, there is a standby replica $x$ of size more than $1/(1+\eta)$ in $B^*$. 
	The total size of items in the bin {is less than} $\frac{j+1}{j+\eta} = \frac{2}{\eta+1}$, and items other than $x$ in $B^*$ have a total size {less than} $\frac{1}{\eta+1}$. {As a result, there is no primary replica of class~1 (of size at least $1/2>\frac{1}{\eta+1}$) or standby replica of class~1 (of size more than $\frac{1}{\eta+1}$) in $B^*$.}
	So, primary replicas in $B^*$ have class~2 or more and hence density at most 1.5. Similarly, standby replicas other than $x$ have class~2 or more and hence density no more than {$\frac{\eta+2}{2}$}. So, all replicas other than $x$ in $B^*$ have a density at most {$\max\{1.5, \frac{\eta+2}{2}\}=$} $\frac{\eta+2}{2}$. 
	Since the total size of these replicas is at most $\frac{1}{\eta+1}$, their total weight is at most {$\frac{1}{\eta+1} \cdot \frac{\eta+2}{2} = \frac{\eta+2}{2\eta+2}$}. 
	{Adding the weight 1 of $x$}, the total weight of replicas in $B^*$ is at most {$\frac{3\eta+4}{2\eta+2}$}, which is at most 1.75, given that $\eta >1$. 
	\vspace{1mm}
	
	\noindent \textbf{ii)} Suppose $j\geq 2$. Recall that the total size of items in $B^*$ is less than $\frac{j+1}{j+\eta}$. First, assume there is also a primary replica $y$ of class~1 in $B^*$. This is possible only if $\frac{1}{j+\eta}+\frac{1}{2}<\frac{j+1}{j+\eta}$, that is, $\eta<j$. The size of replicas other than $y$ in $B^*$ is less than $\frac{j+1}{j+\eta} - 1/2=\frac{j+2-\eta}{2j+2\eta}$, and their density is at most $\max\{1.5, (j+\eta)/j\}$ (primary replicas of class $\geq 2$ have density at most 3/2, and standby replicas have density at most $(j+\eta)/j$). The total weight of replicas other than $y$ is hence {less than} $\max\{\frac{3j+6-3\eta}{4j+4\eta}, \frac{j+2-\eta}{2j}\} \leq \max\{3/4, \frac{j+2-\eta}{2j}\}$, which is at most 0.75, given that $\eta >1$ and $j\geq 2$. Adding the weight 1 of $y$, {the total weight of replicas in $B^*$ will not be more than $1.75$}.
	Next, assume there is no primary replica of class~1 in $B^*$. In this case, the total size of replicas in $B^*$ is at most $\frac{j+1}{j+\eta}$, and their density is at most {$\max\{1.5, (j+\eta)/j\}$}, giving them a total weight of at most $\max\{\frac{3j+3}{2j+2\eta}, (j+1)/j\}$, which is at most $\max\{1.5, (j+\eta)/j \}$ = 1.5, given that $\eta>1$ and $j\geq 2$.
	\vspace*{3mm} 
	
	\noindent \textbf{case 3: all standby replicas in $B^*$ are small: \ }
	Assume there is no regular standby replica in $B^*$, but there is at least one small standby replica in $B^*$. We consider two cases: either there is a primary replica of class~1 in $B^*$ or not:
	\vspace*{1mm} 

	{\noindent \textbf{i)} If there is a primary replica $y$ of class~1 in $B^*$, the remaining space of $B^*$ is less than 1/2 (as $y$ is of size more than 1/2). No other primary replica $y'$ of class~1 can be in $B^*$ because each of $y$ and $y'$ would have a size more than 1/2. Therefore, the remaining space in $B^*$ (of size less than 1/2) can be filled with primary replicas of class $i\geq 2$ (of the density of at most 3/2) and with other standby small replicas (of density 3/2). So, the total weight of items other than $y$ in $B^*$ is at most $1.5 (1/2) = 3/4$. Given that the weight of $y$ is 1, the total weight of items in $B^*$ will be at most $1.75$.}
	\vspace*{1mm} 
	
	\noindent \textbf{ii)}  If there is no primary replica of class~1 in $B^*$, then $B^*$ is filled with primary replicas of class $i \geq 2$ (of density at most 3/2) and standby replicas of class $j=\lfloor 6\eta \rfloor + 1$ (of the density of 3/2). As a result, the total weight of $B^*$ will be no more than $3/2$.
	\qed
\end{proof}
\vspace*{2mm} 
\begin{theorem}
     \Alg has a competitive ratio of at~most~1.75.
\end{theorem}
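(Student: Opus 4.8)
The plan is to combine the two weighting lemmas through the standard weighting argument already outlined at the start of Section~\ref{sect:analysis}. Fix the weighting function defined above, and for any input sequence $\sigma$ let $w(\sigma)$ denote the total weight of all primary and standby replicas created for $\sigma$. The argument then proceeds in three short steps: an upper bound on $\Alg(\sigma)$, a lower bound on $\opt(\sigma)$, and a chaining of the two.

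First I would bound the number of bins used by the algorithm from above. By Lemma~\ref{lemma:algWeight}, every bin in the final packing of \Alg carries weight at least $1$, with the exception of at most a constant number $c$ of bins; the proof of that lemma exhibits $c$ explicitly as a function of $f$ and $\eta$ (the $O(1)$ partially filled regular bins summed over the $O(\eta)$ admissible class pairs $(i,j)$, together with the $O(1)$ underfull primary and standby small bins), and crucially $c$ is independent of the length of $\sigma$. Summing the weights bin by bin therefore gives $\Alg(\sigma) - c \leq w(\sigma)$, that is, $\Alg(\sigma) \leq w(\sigma) + c$. Next I would bound the optimum from below: by Lemma~\ref{lemma:optWeight} each bin of any packing, and in particular of an optimal packing, holds replicas of total weight at most $1.75$. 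Since all replicas of $\sigma$ must be placed, the weight $w(\sigma)$ is spread across the $\opt(\sigma)$ bins with each absorbing at most $1.75$ units, whence $w(\sigma) \leq 1.75\,\opt(\sigma)$.

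Chaining the two inequalities yields $\Alg(\sigma) \leq w(\sigma) + c \leq 1.75\,\opt(\sigma) + c$, which is precisely the definition of an asymptotic competitive ratio of $1.75$ with additive constant $c$. I do not expect a genuine obstacle here, since all of the technical content lives in the two lemmas (and, upstream of them, in the packing and adjustment strategies that guarantee the weight bounds hold for valid packings). The only point requiring care is confirming that the additive term $c$ depends solely on $f$ and $\eta$ and not on $n$, so that the conclusion is a bona fide asymptotic bound rather than a merely additive one; this follows directly from Lemma~\ref{lem:new}, which caps the number of incomplete groups at $f+1$ and thereby keeps the count of non-full bins constant.
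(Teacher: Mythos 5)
Your proposal is correct and follows exactly the same route as the paper's proof: chaining $\Alg(\sigma)\leq w(\sigma)+c$ from Lemma~\ref{lemma:algWeight} with $\opt(\sigma)\geq w(\sigma)/1.75$ from Lemma~\ref{lemma:optWeight}, with the additive constant $c$ depending only on $f$ and $\eta$. No gaps.
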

\begin{proof}
Let $\sigma$ be any input sequence, and $w(\sigma)$ be the total weight of items in~$\sigma$. Let $HS(\sigma)$ be the number of bins that \Alg opens for $\sigma$. By Lemma~\ref{lemma:algWeight}, we have $HS(\sigma) \leq w(\sigma)+c$ for some constant $c$ independent of $|\sigma|$. On the other hand, by Lemma~\ref{lemma:optWeight}, we have $\opt(\sigma) \geq w(\sigma)/1.75$. We can write $\frac{HS(\sigma)}{Opt(\sigma)} \leq \frac{w(\sigma)+c}{w(\sigma)/1.75}$, which converges to 1.75, given that $c$ is a constant. \qed
\end{proof}

\section{Concluding remarks}

We proved that the competitive ratio of \Alg is at most 1.75, which is an improvement over the competitive ratio 2 of the best existing algorithm. We note that this upper bound 
holds for all values of $f$ and $\eta$. 
When $\eta$ is close to 1, the existing lower bounds for the classic online bin packing extend to the fault-tolerant setting. In particular, no fault-tolerant bin packing algorithm can achieve a competitive ratio better than 1.54~\cite{BalBek12,BaloghBDEL19}. 
As a topic for future work, one may consider tightening the gap between the lower bound of 1.54 and and the upper bound 1.75. 

\section*{Acknowledgement}
We acknowledge the support of the Natural Sciences and Engineering Research Council of Canada (NSERC).

\bibliographystyle{splncs04}
\bibliography{refs}

\newpage

\end{document}